\documentclass[10 pt, conference]{ieeeconf}

\IEEEoverridecommandlockouts 
\makeatletter

\let\proof\@undefined
\let\endproof\@undefined
\makeatother
\let\labelindent\relax

\usepackage{mathtools,amssymb,amsthm,bm}
\usepackage[psamsfonts]{eucal}
\usepackage[ruled,vlined,linesnumbered]{algorithm2e}
\usepackage{enumitem,setspace,microtype,threeparttable,float,afterpage,placeins,cancel,leftidx,breqn}
\usepackage[export]{adjustbox}
\usepackage[dvipsnames]{xcolor}
\usepackage[most]{tcolorbox}
\usepackage{soul}
\usepackage{cite}
\usepackage{standalone}
\usepackage{fancyhdr}

\usepackage{jhoagg}

\definecolor{algorithmcomment}{RGB}{0,0,255}

\SetKwInput{KwInput}{Input}
\SetKwInput{KwOutput}{Output}
\SetKwComment{Comment}{// }{}
\SetCommentSty{AlgorithmCommentStyle}

\usepackage[hidelinks]{hyperref}
\usepackage[capitalize,noabbrev]{cleveref}

\newcounter{appsection}
\renewcommand{\theappsection}{\Alph{appsection}}

\crefname{appsection}{Appendix}{Appendices}
\Crefname{appsection}{Appendix}{Appendices}

\newcommand{\appsection}[2][]{%
    \refstepcounter{appsection}%
    \section*{\texorpdfstring{APPENDIX \theappsection \\ #2}{Appendix #2}}%
    \addcontentsline{toc}{section}{Appendix \theappsection: #2}%
    \ifx&#1&\else\label{#1}\fi
}

\let\originalleft\left
\let\originalright\right
\renewcommand{\left}{\mathopen{}\mathclose\bgroup\originalleft}
\renewcommand{\right}{\aftergroup\egroup\originalright}

\fancypagestyle{firststyle}
{\fancyfoot{} 
\fancyfoot[L]{\textit{Preprint submitted to 2024 American Control Conference (ACC)}}
 
}

\newtheoremstyle{indentedplain}
  {3pt}      
  {3pt}      
  {\rmfamily}
  {1em}      
  {\bfseries}
  {.}        
  {0.5em}    
  {}         

\newtheoremstyle{indentedremark}%
  {3pt}{3pt}{}
  {1em}{\bfseries}{.}{0.5em}{}

\theoremstyle{indentedplain}
\newtheorem{theorem}{Theorem}

\newtheorem{proposition}{Proposition}

\theoremstyle{indentedremark}

\allowdisplaybreaks

\newcounter{enumA}
\newcounter{enumC}
\newcounter{enumO}

\newlist{enumA}{enumerate}{1}
\setlist[enumA,1]{%
  label=(A\arabic*),
  leftmargin=1cm,
  before=\setcounter{enumAi}{\value{enumA}},
  after=\setcounter{enumA}{\value{enumAi}}
}

\newlist{enumC}{enumerate}{1}
\setlist[enumC,1]{%
  label=(C\arabic*),
  leftmargin=1cm,
  before=\setcounter{enumCi}{\value{enumC}},
  after=\setcounter{enumC}{\value{enumCi}}
}

\newlist{enumO}{enumerate}{1}
\setlist[enumO,1]{%
  label=(O\arabic*),
  leftmargin=1cm,
  before=\setcounter{enumOi}{\value{enumO}},
  after=\setcounter{enumO}{\value{enumOi}}
}

\crefname{equation}{}{}

\newcommand{\ubar}[1]{{\underline{#1}}}
\DeclareMathOperator*{\argmin}{argmin}

\DeclareMathAlphabet{\mathcal}{OMS}{cmsy}{m}{n} 

\begin{document}
\title{Safe Receding-Horizon Control of Fixed-Wing Aircraft Using Constrained Approximate Dynamic Programming}
\author{Felipe Arenas-Uribe, Ricardo Gutierrez, and Jesse B. Hoagg
\thanks{F. Arenas-Uribe, R. Gutierrez, and Jesse B. Hoagg are with the Department of Mechanical and Aerospace Engineering, University of Kentucky, Lexington, KY, USA. (e-mail: felipearur@uky.edu, ricardo.gutierrez@uky.edu, jesse.hoagg@uky.edu).}
\thanks{This work is supported in part by the National Science Foundation (2450718, 2531196) and USDA NIFA (2024-69014-42393).}       }                                                                                                                                                                                                                                                                    
\maketitle


\begin{abstract}
We present a receding-horizon optimal control for fixed-wing aircraft subject to state constraints. 
These constraints include operational constraints such as altitude, geofencing, and obstacle avoidance, as well as bounds on flight-path angle, roll angle, and airspeed. 
The state constraints are composed into a single soft-minimum control barrier function (CBF). 
We use this composite CBF in a constrained-approximate dynamic program to obtain a sequence of analytic closed-form control functions that approximately minimize a quadratic finite-horizon integral cost subject to the CBF constraint. 
The resulting receding-horizon control is non-myopic in the sense that it approximately optimizes the integral cost while satisfying the state constraint at all times along the entire prediction horizon.
We demonstrate constraint satisfaction and performance in simulations of a fixed-wing aircraft navigating an obstacle-filled airspace under wind uncertainty. 
We also compare this receding-horizon control with 2 other methods.
\end{abstract}

\section{Introduction}

Over the past decade, uncrewed air vehicles (UAVs) have grown rapidly in deployment and applications. 
Fixed-wing UAVs are of interest, in large part, because of their energy efficiency,  flight range, and cargo capabilities \cite{beard_small_2012}.
Applications of fixed-wing UAVs include wild-fire sensing, distributed atmospheric sensing, surveillance and precision agriculture \cite{kim_wildland_2026,sinha_utilizing_2024,zheng_constrained_2023}.
As UAVs continue to become more common, there is a need for autonomous and safe operation within the airspace.

Standard approaches for fixed-wing aircraft control are based on control loops which regulate the deflection of control surfaces using PID or LQR controllers \cite{nelson_flight_1998}. 
As mission requirements become more complex, more state-of-the-art control methods can provide better performance \cite{Dydek_NASA_2010,stastny_nonlinear_2017,heintz_formation_2023}. 
However, these control approaches do not directly handle safety and actuator constraints. 
Thus, constrained control for fixed-wing UAVs remains an important area of research \cite{Cohen_Constrained_2026}.

Constrained control can be formalized as forward invariance of a set of states, referred to as a safe set. 
Methods that address this problem include MPC, reference governors, reachability analysis, and control barrier functions (CBFs). 
In particular, CBFs have been of interest for aerospace applications due to their rigorous guarantees and computational efficiency \cite{Cohen_Constrained_2026}.
Application of CBF methods as safety-filters for fixed-wing aircraft have been addressed in \cite{squires_composition_2022,zheng_constrained_2023,c_altunkaya_nuisance-free_2025,molnar_collision_2025}. 
However, this CBF-based optimization is myopic in the sense that the constraint is enforced pointwise in time based only on the current state without considering how the system trajectory will evolve over a future horizon.

To address this shortcoming, non-myopic CBF methods rely on prediction over a horizon. 
In \cite{breeden_predictive_2022}, the trajectory predicted under a nominal control is used to enforce safety over the horizon; however, the control does not optimize performance over that horizon.
Layered CBF architectures \cite{Sforni_Receding-Horizon_CBF} and nonlinear MPC with CBF constraints \cite{Huang_MPCBF_2025,Zeng_NMPC_2021} balance safety and performance but can be computationally expensive.
Alternatively, predicted-flow CBFs can be used to certify safety over the prediction horizon while simultaneously allowing for performance optimization over the horizon \cite{Safari_PCBFS_OJC_2026,safari_PCBFS_2026}. 
Another approach is presented in \cite{gutierrez_receding-horizon_2026} using constrained approximate dynamic programming (C-ADP) and a quadratic approximation of the cost-to-go. 
This method uses C-ADP combined with a CBF-like constraint to obtain a receding-horizon optimal control that approximately optimizes an integral cost, while guaranteeing that state constraints are satisfied at all times along the entire prediction horizon.

This article presents a receding-horizon optimal control for fixed-wing aircraft subject to state constraints. 
The controller uses a Dubins aircraft model \cite{valavanis_implementing_2015}, where the state is extended to include aircraft speed, roll, and flight path. 
In other words, the control inputs are speed rate, roll rate, and flight-path rate. 
We use barrier functions to encode constraints on flight-path angle, roll angle, airspeed, altitude, geofencing, and obstacle avoidance. 
A wind estimate and an upper bound on the estimation error are used to construct robust airspeed barrier functions. 
We compose the barrier functions into a single soft-minimum CBF, which produces a constraint that can guarantee forward invariance of a subset of the safe set \cite{Rabiee_2024_CBFs,Rabiee_2023_Composition}. 
Next, we use the CBF with C-ADP to construct a sequence of analytic closed-form control functions that approximately minimize a finite-horizon integral cost while satisfying the state constraint along the entire prediction horizon. 
We demonstrate constraint satisfaction and performance in simulations of a fixed-wing aircraft navigating an obstacle-filled airspace under uncertain wind, and we compare the proposed control with two alternative methods.

\section{Notation}

Physical vectors are denoted with arrows (e.g., $\vec{r}$). 
A frame is a collection of mutually orthogonal physical unit vectors. 
If $\SF$ is a frame, then $[\vec{r}]_{\SF}$ is $\vec{r}$ resolved in $\SF$. 
The time derivative of $\vec{r}$ with respect to the frame $\SF$ is denoted by $^{\SF}\dot{\vec{r}}$.


Let $I_3$ be the $3 \times 3$ identity matrix, for $j \in \{1,2,3\}$, let $e_j \in \BBR^3$ be the $j$-th column of $I_3$.
Let $\| \cdot \|$ denote the 2-norm. 
The boundary of $\mathcal{A} \subseteq \BBR^n$ is denoted by $\text{bd }\mathcal{A}$. 


\section{Problem Formulation}

Let $\SF_{\rm E}$ be an inertial frame. The origin of $\SF_{\rm E}$ is $\rm{o}_{\rm E}$ which is any convenient point on the surface of the Earth.
Let $\rm{o}_{\rm A}$ be the location of the aircraft.
The position of $\rm{o}_{\rm A}$ relative to $\rm{o}_{\rm E}$ is $\vec{q}$, which is resolved in $\SF_{\rm E}$ as $q \triangleq \vec{q}|_{\SF_{\rm E}}$.
The velocity of $\rm{o}_{\rm A}$ relative to $\rm{o}_{\rm E}$ with respect to $\SF_{\rm E}$ is $\vec{p} \triangleq \, ^{\SF_{\rm E}}\dot{\vec{q}}$.

Let $\SF_{\rm B}$ be the body frame such that $\vec{p}$ resolved in $\SF_{\rm B}$ is $\vec{p}|_{\SF_{\rm B}} = s e_1$, where $s:[0,\infty) \to (0,\infty)$ is the speed of the aircraft.
Let $\psi:[0,\infty) \to \BBR$ be the aircraft heading angle, and let $\gamma: [0,\infty) \to (-\pi/2, \pi/2)$ be the aircraft flight-path angle. 
The pointing direction is given by
\begin{equation}\label{eq:pointing_direction}
    y\big(\psi,\gamma\big) \triangleq \begin{bmatrix}
        \cos \gamma  \cos \psi \\
        \cos \gamma  \sin \psi \\
        -\sin \gamma
    \end{bmatrix}.
\end{equation}
Note that $y(\psi,\gamma)$ is the unit vector in the direction of $\dot{q}$. 
Thus, the velocity $\vec{p}$ resolved in $\SF_{\rm E}$ is $\vec{p}|_{\SF_{\rm E}} = s y(\psi, \gamma)$, which implies that
\begin{equation}\label{eq:position_dynamics}
    \dot{q}(t) = s(t) y\big(\psi(t),\gamma(t)\big),
\end{equation}
where $q(0) = q_0 \in \BBR^3$. 
The heading angle of the aircraft satisfies
\begin{equation}\label{eq:heading_angle_dynamics}
    \dot{\psi}(t) = \frac{g_{\rm E}}{s(t) \cos \gamma(t)}\tan \phi(t),
\end{equation}
where $\psi(0) = \psi_0 \in \BBR$, $g_{\rm E} = 9.81$~m/s$^2$ and $\phi: [0,\infty) \to (-\pi/2,\pi/2)$ is the roll angle.

Equations~\cref{eq:pointing_direction,eq:position_dynamics,eq:heading_angle_dynamics} are the {Dubins aircraft model} \cite{valavanis_implementing_2015}, which is a kinematic model of a fixed-wing aircraft. 
However, inner-loop aircraft control often specifies the time derivatives of $s$, $\phi$ and $\gamma$, that is, speed rate, roll rate, and flight-path rate \cite{nelson_flight_1998}. 
Thus, we let $s$, $\phi$, and $\gamma$ satisfy 
\begin{equation}\label{eq:control_dynamics}
    \begin{bmatrix} \dot{s}(t) \\ \dot{\gamma}(t) \\ \dot{\phi}(t) \end{bmatrix} = u(t), 
\end{equation}
where $s(0) = s_0 > 0$, $\phi(0) = \phi_0 \in (-\pi/2,\pi/2)$, $\gamma(0) = \gamma_0 \in (-\pi/2,\pi/2)$, and $u \colon [0,\infty) \to \BBR^3$ is the control.


The aircraft model \cref{eq:pointing_direction,eq:position_dynamics,eq:heading_angle_dynamics,eq:control_dynamics} can be written as
\begin{equation} \label{eq:system_dynamics}
    \dot{x}(t) = f(x(t)) + Bu(t),
\end{equation}
where $n=7$ and 
\begin{gather}
x \triangleq \left [\begin{smallmatrix} q \\ 
        \psi \\ 
        s \\ 
        \gamma \\ 
        \phi
    \end{smallmatrix} \right ], \,\,
x_0 \triangleq \left [\begin{smallmatrix}
        q_0 \\ 
        \psi_0 \\
        s_0 \\ 
        \gamma_0 \\ 
        \phi_0
    \end{smallmatrix} \right ],\,\,
f(x) \triangleq 
    \begin{bmatrix}
        s y(\psi, \gamma) \\
        \frac{g_{\rm E} \tan \phi}{s \cos \gamma} \\
        0_{3 \times 1}
    \end{bmatrix}, \,\,
B \triangleq 
    \begin{bmatrix}
        0_{4 \times 3} \\ I_{3 \times 3}
    \end{bmatrix}.\label{eq:system_dynamics.2}
\end{gather}
Note that \cref{eq:system_dynamics,eq:system_dynamics.2} is well defined on
\begin{equation*}
    \SD \triangleq \{ x \in \BBR^n \colon s > 0, -{\pi}/{2} <\gamma<{\pi}/{2}, -{\pi}/{2} <\phi<{\pi}/{2}\}.
\end{equation*}

The flight-path angle $\gamma$ is required to be in the interval $[ \ubar{\gamma}, \bar{\gamma} ]$, where $-\frac{\pi}{2} < \ubar{\gamma} < \bar{\gamma} < \frac{\pi}{2}$, and the roll angle $\phi$ is required to be in the interval $[ \ubar{\phi}, \bar{\phi} ]$, where $-\frac{\pi}{2} < \ubar{\phi} < \bar{\phi} < \frac{\pi}{2}$.
These constraints are modeled with the barrier functions $h_1,\ldots,h_4 \colon \BBR^n \to \BBR$ given by 
\begin{align}
    h_1(x) &\triangleq \gamma - \ubar{\gamma}, \quad h_2(x) \triangleq \bar{\gamma} - \gamma, \label{eq:flight_path_constraint} \\
    h_3(x) &\triangleq \phi - \ubar{\phi}, \quad h_4(x) \triangleq \bar{\phi} - \phi. \label{eq:roll_constraint}
\end{align}

Let $\vec{w}$ be the velocity of the air mass around the aircraft relative to $\rm{o}_{\rm E}$ with respect to $\SF_{\rm E}$, and define $w \triangleq \vec{w}|_{\SF_{\rm E}}$. 
Thus, the airspeed of the aircraft is $v(t,x) \triangleq \| s y(\psi,\gamma) - w(t)\|$, which is required to be in the interval $[ \ubar{v}, \bar{v} ]$, where $\bar{v} > \ubar{v} > 0$.
These constraints are modeled with the barrier functions $h_5,h_6 \colon [0,\infty) \times \BBR^n \to \BBR$ given by 
\begin{align}
    h_5(t,x) &\triangleq v(t,x)^2 - \ubar{v}^2 = \| sy(\psi,\gamma) - w(t) \|^2 - \ubar{v}^2, \label{eq:airspeed_cons_1}\\
    h_6(t,x) &\triangleq \bar{v}^2 - v(t,x)^2 = \bar{v}^2 - \| sy(\psi,\gamma) - w(t) \|^2. \label{eq:airspeed_cons_2}
\end{align}
Note that $v(t,x) \in [ \ubar{v}, \bar{v} ]$ if and only if $h_5(t,x) \ge 0$ and $h_6(t,x) \ge 0$.

The aircraft is also subject to constraints on position $q$, and these constraints can vary in time.
For example, there can be geofence constraints, static and dynamic obstacles, and altitude constraints. 
These constraints are modeled by the intersection of the zero-superlevel sets of barrier functions $b_1,\ldots,b_r: [0,\infty) \times \BBR^3 \to \BBR$.
Thus, for each $t \ge 0$, the set of states that satisfy all constraints is
\begin{align*}
    \mathcal{S}(t) &\triangleq \{ x \in \BBR^n : h_1(x) \geq 0, \ldots ,h_4(x) \geq 0,  h_5(t,x) \geq 0, \\
    &\qquad  h_6(t,x) \geq 0, b_1(t,q) \geq 0, \ldots, b_r(t,q) \geq 0\}.
\end{align*}

Next, let $T>0$ be a prediction horizon, and for each $t\geq 0$, consider the receding-horizon cost
\begin{align}
J \left ( t,x_0,\bar{u}(\cdot) \right )
    &\triangleq \int_t^{t+T} 
    \frac{1}{2}x(\tau)^\top Q(\tau)x(\tau) +\Gamma(\tau)^\top x(\tau) \nn\\
    &\quad +\frac{1}{2}\bar{u}(\tau)^\top R(\tau)\bar{u}(\tau) +\Omega(\tau)^\top\bar{u}(\tau) \, \mathrm{d}\tau,\label{eq:cost_function}
\end{align}
where $x$ satisfies \cref{eq:system_dynamics,eq:system_dynamics.2} with $u = \bar{u}$, $Q(t) \in \BBR^{n  \times n}$ is positive semidefinite, $R(t) \in \BBR^ {3 \times 3}$ is positive definite, $\Gamma(t) \in \BBR^{n}$, and $\Omega(t) \in \BBR^3$.

The wind $w$ is not assumed to be known. 
Instead, let $\hat{w} \colon [0, \infty) \to \BBR^3$ be a continuously differentiable estimate of wind velocity, where at time $t\ge0$, $\hat{w}$ and $\dot{\hat{w}}$ are assumed to be known on the interval $[t, t + T]$.
We assume that there is a known upper bound on the magnitude of the estimation error.
Specifically, there exists a known $\bar w >0$ such that for all $t \ge 0$, $\|w(t) - \hat{w}(t)\| \le \bar{w}$.
For example, $\hat w$ could be an estimate of the mean wind, and $\bar w$ an upper bound on the magnitude of the wind perturbation from its mean. 

The objective of this article is to construct a control $u$ that minimizes the cost \cref{eq:cost_function} subject to the constraint that $x(t) \in \mathcal{S}(t)$ for all $t \in [0,\infty)$.

\section{Soft-Minimum Relaxed CBF}

This section presents a constraint that can make a subset of $\mathcal{S}(t)$ forward invariant.
This constraint is based on a relaxed CBF, that is, a function that satisfies CBF criteria on the boundary of its zero-superlevel set but not necessarily on the interior. 

We note that $h_1, \dots h_6$ have relative degree~1 with respect to \cref{eq:system_dynamics,eq:system_dynamics.2}, whereas $b_1, \ldots, b_r$ have relative degree~2 with respect to \cref{eq:system_dynamics,eq:system_dynamics.2}. 
Thus, we adopt a higher-order approach to construct candidate CBFs from $b_1, \ldots, b_r$.
For $i \in \{1, \ldots, r\}$, let $\beta_{i} \colon \BBR \to \BBR$ be a 2-times continuously differentiable extended class-$\mathcal{K}$ function. 
Then, consider $h_{i+6} \colon [0,\infty) \times \BBR^{n} \to \BBR$ defined by
\begin{equation}
h_{i+6}(t, x) \triangleq \frac{\partial b_i(t,q)}{\partial t} + \frac{\partial b_i(t,q)}{\partial q} s y(\psi,\gamma) + \beta_{i}(b_{i}(t,q)).\label{eq:h_11}
\end{equation}
For all $t \ge 0$, the intersection of zero-superlevel sets of $h_1,\ldots,h_{r+6}$ is 
\begin{align*}
    \mathcal{P}(t) &\triangleq \{ x \in \BBR^n : h_1(x) \geq 0, \ldots ,h_4(x) \geq 0, \\
    &\qquad  h_5(t,x) \geq 0, \ldots, h_{r+6}(t,x) \geq 0\}.
\end{align*}
and define
\begin{equation}
\SC(t) \triangleq \SSS(t) \cap \SP(t). \label{eq:C_set}
\end{equation}
Note that \cite[Theorem 3]{xiao_high-order_2022} implies that if $x_0 \in \SC(0)$ and for all $t \in [0, \infty)$, $x(t) \in \SP(t)$, then for all $t \in [0, \infty)$, $x(t) \in \SC(t) \subseteq \mathcal{S}(t)$.
However, $\SP$ is constructed from the zero-superlevel set of \cref{eq:airspeed_cons_1,eq:airspeed_cons_2}, which are not implementable because they depend on the unknown wind $w$. 
Thus, we use the estimate $\hat w$ and upper bound $\bar w$ to construct implementable lower bounds. 
In particular, let $\varepsilon> 0$, and consider $\ubar{h}_{5},\ubar{h}_{6} \colon [0,\infty) \times \BBR^n \to \BBR$ given by 
\begin{align}
    \ubar{h}_{5}(t,x) &\triangleq \| sy(\psi,\gamma) - \hat{w}(t) \|^2 - \ubar{v}^2 \notag \\
    & \qquad - 2\bar{w}\sqrt{\varepsilon + \| sy(\psi,\gamma) - \hat{w}(t) \|^2}, \label{eq:airspeed_cons_bound_1} \\
    \ubar{h}_{6}(t,x) &\triangleq \bar{v}^2 - \| sy(\psi,\gamma) - \hat{w}(t) \|^2 \notag \\
     & \qquad - 2\bar{w}\sqrt{\varepsilon + \| sy(\psi,\gamma) - \hat{w}(t) \|^2} - \bar{w}^2. \label{eq:airspeed_cons_bound_2}
\end{align}
The next result shows that the computable functions \cref{eq:airspeed_cons_bound_1,eq:airspeed_cons_bound_2} are lower bounds for \cref{eq:airspeed_cons_1,eq:airspeed_cons_2}. 
The proof is in \cref{app:prop_proofs}.

\begin{proposition}\rm 
\label{prop:robust_constraint}
Let $(t,x) \in [0,\infty) \times \BBR^{n}$. 
Then, $\ubar{h}_5(t,x) \leq h_5(t,x)$ and $\ubar{h}_6(t,x) \leq h_6(t,x)$.
\end{proposition}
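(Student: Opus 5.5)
Write $a \triangleq s y(\psi,\gamma) - \hat w(t)$ and $d \triangleq \hat w(t) - w(t)$, so that $s y(\psi,\gamma) - w(t) = a + d$ with $\|d\| \le \bar w$ by the standing assumption on the estimation error. The plan is to expand $\|a+d\|^2 = \|a\|^2 + 2a^\top d + \|d\|^2$ and bound the cross term using Cauchy--Schwarz, $|a^\top d| \le \|a\|\,\|d\| \le \bar w \|a\|$. Combined with the elementary inequality $\|a\| \le \sqrt{\varepsilon + \|a\|^2}$ (valid since $\varepsilon>0$), this gives
\begin{equation*}
\bigl|2a^\top d\bigr| \le 2\bar w\sqrt{\varepsilon + \|a\|^2}.
\end{equation*}

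For $h_5$, I will drop the nonnegative term $\|d\|^2$ and use the lower bound on the cross term:
\begin{equation*}
\|a+d\|^2 \ge \|a\|^2 - 2\bar w\sqrt{\varepsilon+\|a\|^2}.
\end{equation*}
Subtracting $\ubar v^2$ from both sides yields $h_5(t,x) \ge \ubar h_5(t,x)$ directly from \cref{eq:airspeed_cons_1,eq:airspeed_cons_bound_1}.

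For $h_6$, I need an upper bound on $\|a+d\|^2$. Using $\|d\|^2 \le \bar w^2$ and the upper bound on the cross term gives
\begin{equation*}
\|a+d\|^2 \le \|a\|^2 + 2\bar w\sqrt{\varepsilon+\|a\|^2} + \bar w^2.
\end{equation*}
Negating and adding $\bar v^2$ then gives $h_6(t,x) \ge \ubar h_6(t,x)$ by \cref{eq:airspeed_cons_2,eq:airspeed_cons_bound_2}.

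There is no real obstacle here. The only points that need care are that $\varepsilon$ serves purely to make the bound smooth and that the $\bar w^2$ term is needed only in the $h_6$ bound, which matches the asymmetry between \cref{eq:airspeed_cons_bound_1} and \cref{eq:airspeed_cons_bound_2}.
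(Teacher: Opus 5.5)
Your proof is correct and follows essentially the same route as the paper: expand $\|sy(\psi,\gamma)-w(t)\|^2$ about $\hat w(t)$, bound the cross term by Cauchy--Schwarz using $\|w(t)-\hat w(t)\|\le\bar w$, and use $\|a\|\le\sqrt{\varepsilon+\|a\|^2}$. The only difference is presentation: the paper lower-bounds the difference $h_5-\ubar{h}_5$ and leaves $h_6$ as ``a similar procedure,'' whereas you bound $\|a+d\|^2$ from both sides and so also carry out the $h_6$ case explicitly.
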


Next, for all $t \ge 0$, define
\begin{gather}
\ubar{\SP}(t) \triangleq \{ x \in \SP(t) \colon \ubar{h}_5(t,x) \geq 0,\ubar{h}_6(t,x) \geq 0\},  \label{eq:P_bar_set}\\
 \ubar{\SC}(t) \triangleq \SSS(t) \cap \ubar{\SP}(t),\label{eq:C_bar_set}
\end{gather}
and it follows from \cref{prop:robust_constraint} that for all $t \ge 0$, $\ubar{\SP}(t) \subset \SP(t)$ and $\ubar{\SC}(t) \subset \SC(t)$.
%
%
%
%
%
The next result shows that forward invariance of $\ubar{\SP}(t)$ ensures forward invariance of $\ubar{\SC}(t)$.
The proof is in \cref{app:prop_proofs}.

\begin{proposition}\label{prop:forward_invariance}
Let $x_0 \in \ubar{\SC}(0)$, and let $t_\rmm \in (0,\infty]$.
Assume for all $t \in [0,t_\rmm)$, $x(t) \in \ubar{\SP}(t)$. 
Then, for all $t \in [0,t_\rmm)$, $x(t) \in \ubar{\SC}(t)$.
\end{proposition}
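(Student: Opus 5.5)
Since $\ubar{\SC}(t) = \SSS(t) \cap \ubar{\SP}(t)$ and $x(t) \in \ubar{\SP}(t)$ is assumed, it suffices to show that $x(t) \in \SSS(t)$ for all $t \in [0,t_\rmm)$. The constraints defining $\SSS(t)$ are $h_1,\ldots,h_4 \ge 0$, $h_5,h_6 \ge 0$, and $b_1,\ldots,b_r \ge 0$.

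The first six are immediate. Since $\ubar{\SP}(t) \subset \SP(t)$, the assumption gives $h_1(x(t)),\ldots,h_4(x(t)) \ge 0$ and $h_5(t,x(t)),h_6(t,x(t)) \ge 0$ directly. In particular, $\gamma(t)$ and $\phi(t)$ stay in closed subintervals of $(-\pi/2,\pi/2)$. The airspeed bound $v \ge \ubar{v} > 0$ does not by itself force $s > 0$, but the trajectory is assumed to exist on $[0,t_\rmm)$ and hence lies in $\SD$. So only the position constraints $b_i(t,q(t)) \ge 0$ remain.

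Fix $i \in \{1,\ldots,r\}$ and define $\eta_i(t) \triangleq b_i(t,q(t))$. By \cref{eq:position_dynamics},
\begin{equation*}
\dot\eta_i(t) = \frac{\partial b_i(t,q(t))}{\partial t} + \frac{\partial b_i(t,q(t))}{\partial q} s(t) y(\psi(t),\gamma(t)).
\end{equation*}
Comparing with \cref{eq:h_11}, we get $\dot\eta_i(t) = h_{i+6}(t,x(t)) - \beta_i(\eta_i(t))$. Since $x(t) \in \SP(t)$, we have $h_{i+6}(t,x(t)) \ge 0$, and therefore
\begin{equation*}
\dot\eta_i(t) \ge -\beta_i(\eta_i(t)) \quad \text{for all } t \in [0,t_\rmm).
\end{equation*}
Also $\eta_i(0) = b_i(0,q_0) \ge 0$ because $x_0 \in \ubar{\SC}(0) \subseteq \SSS(0)$.

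The main step is to show that $\eta_i$ never becomes negative. I would first cite \cite[Theorem 3]{xiao_high-order_2022} (equivalently the remark following \cref{eq:C_set}), applied on $[0,t_\rmm)$ instead of $[0,\infty)$. For a self-contained argument, suppose $\eta_i(t_2) < 0$ for some $t_2$. Let $t_1 \triangleq \sup\{t \le t_2 : \eta_i(t) \ge 0\}$. By continuity $\eta_i(t_1) = 0$ and $\eta_i < 0$ on $(t_1,t_2]$. On that interval $-\beta_i(\eta_i) > 0$, since $\beta_i$ is extended class-$\mathcal{K}$ and $\eta_i < 0$. Hence $\dot\eta_i > 0$ there, so $\eta_i$ is nondecreasing on $[t_1,t_2]$ and $\eta_i(t_2) \ge \eta_i(t_1) = 0$, a contradiction. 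This argument needs only continuous differentiability of $\eta_i$, which holds because $b_i$ is differentiable and $x$ is absolutely continuous. I expect the only care needed is this regularity and the handling of a finite $t_\rmm$; the rest is bookkeeping.

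Therefore $b_i(t,q(t)) \ge 0$ for all $i$ and all $t \in [0,t_\rmm)$, so $x(t) \in \SSS(t)$, and together with the assumption $x(t) \in \ubar{\SP}(t)$ this gives $x(t) \in \ubar{\SC}(t)$.
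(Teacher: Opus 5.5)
Your proof is correct and follows the same route as the paper. You reduce to showing $x(t)\in\SSS(t)$, read off $h_1,\ldots,h_6\ge 0$ from $\ubar{\SP}(t)\subset\SP(t)$, and get $b_i(t,q(t))\ge 0$ from the high-order CBF result \cite[Theorem 3]{xiao_high-order_2022}. Your inline comparison argument for $\eta_i$ is a valid self-contained substitute for that citation, and unlike the paper's write-up (which invokes the result on $[0,\infty)$) it works directly on $[0,t_\rmm)$. One small fix: $\eta_i$ is differentiable because $\dot q = s\,y(\psi,\gamma)$ is continuous, so $q$ is $C^1$; absolute continuity of $x$ alone would not give this.
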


\cref{prop:forward_invariance} shows that forward invariance of $\ubar{\SP}(t)$ implies that for all $t \in [0,\infty)$, $x(t) \in \ubar{\SC}(t) \subset \SC(t) \subset \SSS(t)$.
Hence, we construct a candidate CBF whose zero-superlevel set is a subset of $\ubar{\SP}(t)$. 
Let $\rho > 0$, and consider $h \colon [0,\infty) \times\BBR^{n} \to \BBR$ defined by
\begin{align}
    h(t,x) &\triangleq \mathrm{softmin}_\rho \Big( h_1(x), \ldots ,h_4(x) , \ubar{h}_5(t,x), \ubar{h}_6(t,x) , \nn\\
    &\qquad h_7(t,x), \ldots, h_{r+6}(t,x) \Big ),\label{eq:h}
\end{align}
where 
\begin{equation*}
\mathrm{softmin}_\rho ( z_1, \ldots, z_N) \triangleq - \frac{1}{\rho} \mathrm{log} \sum_{i=1}^N e^{-\rho z_i}
\end{equation*}
is the log-sum-exponential \textit{soft minimum}. 
The zero-superlevel set of $h$ is
\begin{equation}\label{eq:h_set}
    \mathcal{H}(t) \triangleq \{x\in\BBR^{n}:h(t,x)\geq0\}.
\end{equation}
It follows from \cite[Proposition 2]{safari_time-varying_2025} that for fall $t \in [0,\infty)$, $\SH(t) \subset \ubar{\SP}(t)$ and as $\rho \to \infty$, $\SH(t) \to \ubar{\SP}(t)$.
The soft minimum is a continuously differentiable lower bound on the minimum. The worst case error bound of the soft minimum with respect to the minimum is $(\log N)/\rho$ \cite{rabiee_soft-minimum_2025}. 
Thus, selecting $\rho$ is a trade off between conservativeness with respect to $N$ and the magnitude of $\frac{\partial h(t,x)}{\partial x}$ at points where the minimum is not differentiable; for more details see \cite{rabiee_soft-minimum_2025}.

Next, consider the relaxed CBF constraint function $\nu \colon [0,\infty) \times \BBR^n \times \BBR^3 \times \BBR \to \BBR$ given by 
\begin{align}
    \nu(t,x,\bar u,\bar \kappa) &\triangleq \frac{\partial h(t,x)}{\partial t} + \frac{ \partial h(t,x)}{\partial x} \left [ f(x) + B \bar u \right ] \nn\\
    &\qquad + \alpha \big( h(t,x) \big ) + \bar \kappa h(t,x), \label{eq:nu}
\end{align}
where $\bar u \in \BBR^3$ is the control variable, $\bar \kappa \in \BBR$ is a slack variable, and $\alpha \colon \BBR \to \BBR$ is locally Lipschitz and nondecreasing such that $\alpha(0)=0$.
The CBF constraint $\nu(t,x,\bar u,\bar \kappa) \ge 0$ describes control and slack variables $\kappa \ge 0$ that make $\SH(t) \subset \ubar{\SP}(t)$ forward invariant, which, in turn, make $\ubar{\SC}(t) \subset \SSS(t)$ forward invariant (from \cref{prop:forward_invariance}). 
The slack term $\bar \kappa h(t,x)$ in \eqref{eq:nu} ensures that $\nu(t,x,\bar u,\bar \kappa) \ge 0$ is feasible for all $(t,x) \in [0,\infty) \times \BBR^n$ such that $x \not \in \mbox{bd } \SH(t)$.
The following result provides a sufficient condition such that $\nu(t,x,\bar u,\bar \kappa) \ge 0$ is feasible for all $(t,x) \in [0,\infty) \times \BBR^n$.  
The proof is in \cref{app:prop_proofs}.

\begin{proposition}\rm \label{prop:omega_feasibility}
Assume that for all $(t,x) \in [0,\infty) \times \in \mbox{bd } \SH(t)$, $\frac{ \partial h(t,x)}{\partial x} B \ne 0$. 
Then, for all $(t,x) \in [0,\infty) \times \BBR^n$, there exist $(\bar u, \bar \kappa) \in \BBR^3 \times \BBR$ such that $\nu(t,x,\bar u,\bar \kappa) \geq 0$.
\end{proposition}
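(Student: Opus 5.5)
The plan is a case split on whether $h(t,x)$ vanishes, handling each case by choosing one of the two free variables $(\bar u,\bar\kappa)$ explicitly while fixing the other. Fix $(t,x)\in[0,\infty)\times\BBR^n$ and write
\begin{equation*}
\nu(t,x,\bar u,\bar\kappa) = a(t,x) + \frac{\partial h(t,x)}{\partial x}B\bar u + \bar\kappa\, h(t,x),
\end{equation*}
where
\begin{equation*}
a(t,x)\triangleq \frac{\partial h(t,x)}{\partial t}+\frac{\partial h(t,x)}{\partial x}f(x)+\alpha\big(h(t,x)\big).
\end{equation*}
For fixed $(t,x)$, $a(t,x)$ is simply a real number. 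The map $(\bar u,\bar\kappa)\mapsto\nu$ is affine, with coefficient $\frac{\partial h}{\partial x}B$ on $\bar u$ and coefficient $h(t,x)$ on $\bar\kappa$. So it suffices to show that at least one of these two coefficients is nonzero, or that $a(t,x)\ge 0$ when both vanish.

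\emph{Case 1: $h(t,x)\neq 0$.} Take $\bar u=0$ and
\begin{equation*}
\bar\kappa = -\frac{a(t,x)}{h(t,x)}.
\end{equation*}
Then $\nu(t,x,0,\bar\kappa)=0\ge 0$. This covers every $x\notin \mathrm{bd}\,\SH(t)$ at which $h\ne 0$, and it also covers any point with $h\ne0$ regardless of where it lies.

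\emph{Case 2: $h(t,x)=0$.} Here $x\in\SH(t)$, and I need to argue that $x\in\mathrm{bd}\,\SH(t)$ so that the hypothesis applies. Take $\bar\kappa=0$ and set $c\triangleq \frac{\partial h(t,x)}{\partial x}B\neq 0$, which holds by assumption. Choose
\begin{equation*}
\bar u = -\frac{a(t,x)}{\|c\|^2}\,c^\top,
\end{equation*}
so that $\nu = a - a = 0\ge 0$.

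The main obstacle is justifying that $h(t,x)=0$ implies $x\in\mathrm{bd}\,\SH(t)$. Since $h(t,\cdot)$ is continuous, $\{h>0\}$ is open. However, a point with $h=0$ could in principle lie in the interior of $\SH(t)$, namely at a local minimum of $h(t,\cdot)$ with value zero. I would handle this first by noting that at such an interior point $\frac{\partial h}{\partial x}=0$, so $c=0$, and $x$ is a minimizer. The fallback is to interpret the hypothesis as applying to the level set $\{h(t,x)=0\}$, which is the standard reading of $\mathrm{bd}\,\SH(t)$ in the CBF literature, and which the paper's remark about feasibility off $\mathrm{bd}\,\SH(t)$ suggests. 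I would state this identification explicitly.

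If a more careful treatment is desired, I would show the point is genuinely a boundary point whenever $c\neq0$. Indeed, if $c\ne0$, then moving $x$ along $Bc^\top$ in the negative direction strictly decreases $h$ to first order. This produces points with $h<0$ arbitrarily close to $x$, so $x\in\mathrm{bd}\,\SH(t)$, which is consistent with the hypothesis. With the level-set identification in hand, the two cases exhaust all $(t,x)$ and the proposition follows.
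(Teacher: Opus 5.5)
This is essentially the paper's proof: you cancel $a(t,x)$ with $\bar u$ along $(\frac{\partial h}{\partial x}B)^\top$ where $h=0$, and you solve $\bar\kappa=-a/h$ where $h\neq 0$ (the paper leaves $\bar\kappa_1$ and $\bar u_2$ arbitrary rather than zero). The paper splits on $x\in\mathrm{bd}\,\SH(t)$ instead, but it relies on the same identification $\{x\colon h(t,x)=0\}=\mathrm{bd}\,\SH(t)$ when it divides by $h$ off the boundary. Note, however, that this identification is needed, not merely a fallback. At an interior zero, $\frac{\partial h}{\partial x}=0$ gives $\nu=\frac{\partial h}{\partial t}$ for every $(\bar u,\bar\kappa)$. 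Your ``more careful'' paragraph only proves the converse ($c\neq 0\Rightarrow x\in\mathrm{bd}\,\SH(t)$), so it does not remove the need for the level-set reading.
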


For the remainder of this article, we make the following feasibility assumption. 
\begin{enumA}
    \item For all $(t,x) \in [0,\infty) \times \in \mbox{bd } \SH(t)$, $\frac{ \partial h(t,x)}{\partial x} B \ne 0$. \label{assump:Lgh}
\end{enumA}







\section{Receding-Horizon Optimal Control with State Constraints}

This section presents a receding-horizon control using the C-ADP approach and associated closed-form solution from \cite{gutierrez_receding-horizon_2026}.
First, we define 
\begin{gather*}
    a(t, x) \triangleq \frac{\partial h(t,x)}{\partial t} + \frac{ \partial h(t,x)}{\partial x} f(x) + \alpha\big(h(t,x)\big), \\
    b(t, x)^\top \triangleq \begin{bmatrix}
        \frac{ \partial h(t,x)}{\partial x} B & 
        h(t,x)
    \end{bmatrix},
\end{gather*}
and it follows that $\nu(t,x,\bar u, \bar \kappa) = a(t,x) + b(t,x)^\top \bar \upsilon$, where  
    $\bar \upsilon \triangleq \left [ \begin{smallmatrix}
        \bar u \\
        \bar \kappa
    \end{smallmatrix} \right ]$.
This constraint is an affine function of $\bar u$ and $\bar \kappa$, whereas the cost \cref{eq:cost_function} is quadratic in $\bar{u}$ and does not depend on $\bar \kappa$.  
Thus, let $r_{\kappa}>0$, and consider the regularized cost
\begin{equation}\label{eq:regularized_cost}
    \bar J(t,x_0,\bar{\upsilon}(\cdot))
    \triangleq J(t,x(\cdot),\bar{u}(\cdot)) + \frac{r_{\kappa}}{2} \int_t^{t+T} \bar{\kappa}(\tau)^2 \, \mathrm{d}\tau.
\end{equation}
We use a receding-horizon C-ADP approach to construct $\bar \upsilon$ such that for all $t \ge 0$, $\bar J(t,x_0,\bar{\upsilon}(\cdot))$ is approximately minimized subject to the constraint $a(\tau,x(\tau)) + b(\tau,x(\tau))^\top \bar \upsilon(\tau) \ge 0$ for all $\tau \in [t,t+T]$.

\subsection{C-ADP Solution}

The C-ADP method uses time discretization of the planning horizon $T$. 
Let $T_{\rm p}>0$ be the planning time step such that $N=T/T_{\rm p}$ is an integer, and define $\SN \triangleq \{0, \ldots, N-1\}$.
Consider the discretization of \cref{eq:system_dynamics,eq:system_dynamics.2} given by 
\begin{equation}\label{eq:cadp_dynamics}
    x_{i+1} = F(x_i) + G(x_i) \bar \upsilon_i,
\end{equation}
where the discretization time is $T_{\rm p}$, and $F$ and $G$ are assumed to be continuously differentiable on $\SD$. 
Note that the right-most column of $G(x_i)$ is zero because \cref{eq:system_dynamics,eq:system_dynamics.2} does not depend on $\bar \kappa$, which is the bottom element of $\bar \upsilon$.
The discrete model \eqref{eq:cadp_dynamics} can be obtained from a variety of discretization approaches. 
For example, a simple forward-Euler discretization yields $F(x) = x + T_{\rm p}f(x)$ and $G(x) = \begin{bmatrix} T_{\rm p}B & 0_{n \times 1}\end{bmatrix}$. 
However, higher-order methods can be used.

We note that the discretization of $\bar \upsilon$ over the planning horizon $T$ and the associated discrete model \eqref{eq:cadp_dynamics} are used only to optimize the cost $\bar J$. 
The C-ADP approach yields closed-form feedback functions for each step of the planning horizon $T$. 
Then, these closed-form feedback functions are updated in real time using a receding horizon.
Notably, the CBF constraint is satisfied continuously for all real time $t$ and planning time $\tau$.

The $N$ control functions $\bar \upsilon_0,\ldots,\bar \upsilon_N$ are updated with period $T_{\rm s}\in(0,T_{\rm p}]$, which implies that the receding horizon update can be faster than the sample time $T_\rmp$ used for discretization of the plan.  
For all $(i,k) \in \SN \times \BBN$, define the prediction times $t_{i,k} \triangleq iT_{\rm p}+kT_{\rm s}$. 
Then, for all $k \in \BBN$, the cost \cref{eq:regularized_cost} at time $kT_\rms$ is discreteized as
\begin{equation}\label{eq:discrete_cost}
    \bar J_k(x_0,\bar \upsilon_0,\ldots,\bar \upsilon_{N-1}) \triangleq \ell_{N,k}(x_{N}) + \sum_{i=0}^{N-1} \ell_{i,k}(x_i, \bar \upsilon_i),
\end{equation}
where $x_1,\ldots,x_N$ satisfy \eqref{eq:cadp_dynamics}, and 
\begin{gather}
\ell_{i,k}(x, \bar \upsilon) \triangleq \frac{1}{2}x^{\top}Q_{i,k}x +\Gamma_{i,k}^{\top}x \! + \!\frac{1}{2}\bar  \upsilon^{\top}R_{i,k} \bar \upsilon +\Omega_{i,k}^{\top} \bar \upsilon ,\label{eq:stage_cost}\\
\ell_{N,k}(x) \triangleq \frac{1}{2}x^{\top}Q_{N,k}x +\Gamma_{N,k}^{\top}x, \label{eq:terminal_cost}
\end{gather}
where
\begin{align}
    Q_{i,k}&\triangleq Q(t_{i,k}), &
    \Gamma_{i,k}&\triangleq \Gamma(t_{i,k}), \label{eq:discrete_state_weights}\\
    R_{i,k}&\triangleq 
    \begin{bmatrix}
        R(t_{i,k})&0_{3\times1}\\
        0_{1\times3}&r_\kappa
    \end{bmatrix}, &
    \Omega_{i,k}&\triangleq 
    \begin{bmatrix}
        \Omega(t_{i,k})\\0
    \end{bmatrix}. \label{eq:discrete_input_weights}
\end{align}

Finally, for all $(i,k) \in \SN \times \BBN$, consider the \textit{C-ADP control functions} $\upsilon^\star_{i,k} \colon [0,\infty) \times \SD \to \BBR$ defined by
\begin{equation}\label{eq:cadp_policy}
\upsilon^\star_{i,k}(t,x) = \zeta_{i,k}(x) + \lambda_{i,k}(t,x) W_{i,k}(x) b(t, x),
\end{equation}
where
\begin{align}
\lambda_{i,k}(t,x) &\triangleq \max \left \{0,\frac{ -a(t, x)-b(t,x)^\top\zeta_{i,k}(x)}{b(t, x)^\top W_{i,k}(x) b(t, x) } \right \},\label{eq:cadp_lambda}\\
\zeta_{i,k}(x) &\triangleq -W_{i,k}(x) [ G(x)^\top\bigl(P_{i+1,k}F(x)+T_{i+1,k}\bigr) + \Omega_{i,k}  ], \label{eq:zeta_ik} \\
W_{i,k}(x) &\triangleq \left(R_{i,k}+G(x)^\top P_{i+1,k}G(x)\right)^{-1}, \label{eq:W_ik}
\end{align}
where
\begin{align}
P_{i,k} &\triangleq Q_{i,k}+K_{i,k}^\top R_{i,k}K_{i,k} + \tilde{A}_{i,k}^\top P_{i+1,k}\tilde{A}_{i,k}, \label{eq:cadp_P_recursion} \\
T_{i,k} &\triangleq  \tilde{A}_{i,k}^\top P_{i+1,k} \bigl(F(\bar{x}_{i,k}) + G(x)\upsilon^\star_{i,k}(t_{i,k},\bar{x}_{i,k}) - \tilde{A}_{i,k}\bar{x}_{i,k} \bigr) \notag\\
 & \quad  + K_{i,k}^\top \bigl(R_{i,k}(\upsilon^\star_{i,k}(t_{i,k},\bar{x}_{i,k}) -K_{i,k}\bar{x}_{i,k})+\Omega_{i,k}\bigr) \notag \\
 & \quad +\tilde{A}_{i,k}^\top T_{i+1,k} + \Gamma_{i,k},
\label{eq:cadp_T_recursion}
\end{align}
where $P_{N,k}=Q_{N,k}$ and $T_{N,k}=\Gamma_{N,k}$, and 
\begin{gather}
K_{i,k} \triangleq \left. \frac{\partial \tilde{\upsilon}^\star_{i,k}(x)}{\partial x} \right|_{x=\bar{x}_{i,k}}, \quad
\tilde{A}_{i,k} \triangleq \left. \frac{\partial \tilde{F}^\star_{i,k}(x)}{\partial x} \right|_{x=\bar{x}_{i,k}}, \label{eq:cadp_gains}\\
\tilde{\upsilon}^\star_{i,k}(x) \triangleq \zeta_{i,k}(x) + \tilde{\lambda}_{i,k}(x) W_{i,k}(x)b(t_{i,k}, x), \label{eq:cadp_smooth_policy}\\
\tilde{F}^\star_{i,k}(x) \triangleq F(x) + G(x) \tilde{\upsilon}^\star_{i,k}(x),
\label{eq:cadp_smooth_dynamics}\\
\tilde{\lambda}_{i,k}(x) \triangleq \operatorname{softplus}_{\eta} \left (\frac{-a(t_{i,k}, x) -b(t_{i,k}, x)^\top\zeta_{i,k}(x)}{b(t_{i,k}, x)^\top W_{i,k}(x) b(t_{i,k}, x)} \right ),
\end{gather}
where $\operatorname{softplus}_{\eta}(z) \triangleq \frac{1}{\eta}\log\bigl(1+e^{\eta z}\bigr)$, $\eta>0$, and $\{\bar{x}_{i,k}\}_{i=0}^{N}$ is a nominal prediction trajectory used for the approximations \cref{eq:cadp_gains,eq:cadp_smooth_policy,eq:cadp_smooth_dynamics}.

For each $k\in \BBN$, the functions $\upsilon^{\star}_{N-1,k}, \ldots, \upsilon^{\star}_{0,k}$ are solved using a backward recursion, where \cref{eq:cadp_policy,eq:cadp_lambda,eq:zeta_ik,eq:W_ik,eq:cadp_P_recursion,eq:cadp_T_recursion} are solved from $i = N-1$ to $i = 0$ with the terminal conditions $P_{N,k}=Q_{N,k}$ and $T_{N,k}=\Gamma_{N,k}$.
We note that the C-ADP solution \cref{eq:cadp_policy,eq:cadp_lambda,eq:zeta_ik,eq:W_ik,eq:cadp_P_recursion,eq:cadp_T_recursion} simplifies to the classic finite-horizon linear quadratic regulator in the case of linear dynamics and no active constraints. 
See \cite[Remark 1]{gutierrez_receding-horizon_2026} for more details.

For each $k \in \BBN$, the \textit{C-ADP control plan over the horizon $[k T_{\rm s}, k T_{\rm s} + T]$} is $\upsilon_{k}^{\star} \colon [k T_{\rm s}, k T_{\rm s} + T] \times \SD \to \BBR^4$, where for all $i \in \SN$ and all $\tau\in [t_{i,k}, t_{i+1,k})$, 
\begin{equation} \label{eq:CADP_plan}
    \upsilon_{k}^{\star}(\tau, x) \triangleq  \upsilon^{\star}_{i,k}(\tau,x). 
\end{equation}
For $k=0$, a user-specified initial nominal prediction trajectory $\{\bar{x}_{i,0}\}_{i=0}^N$ is used to compute $P_{i,0}$ and $T_{i,0}$. 
For $k\in\BBN \setminus \{ 0 \}$, the C-ADP control plan $\upsilon_{k-1}^\star(\tau, x)$ from the $k-1$ backward pass is used in a forward pass to generate the prediction trajectory $\{\bar{x}_{i,k}\}_{i=0}^N$. 
More specifically, $\{\bar{x}_{i,k}\}_{i=0}^N$ is obtained from a sampling of the solution to the system dynamics \cref{eq:system_dynamics,eq:system_dynamics.2} with control $u(t) = [ \begin{smallmatrix}I_{3\times3}&0_{3\times1}\end{smallmatrix} ]\upsilon_{k-1}^\star(t-k T_\rms, x)$ with initial condition $x_{0} = x(t_{0,k})$.

\subsection{Properties of the C-ADP Control Plan}

The next result shows that for each $k \in \BBN$, the closed-form C-ADP control plan \eqref{eq:CADP_plan} satisfies the constraint $a(\tau,x) + b(\tau,x)^\top \upsilon_k^\star(\tau,x) \ge 0$ over the planning horizon $\tau \in [kT_\rms , k T_\rms + T ]$.

\begin{theorem}\label{thm:planning_safety}
Assume that \ref{assump:Lgh} is satisfied. 
Then, for all $k \in \BBN$ and all $(\tau,x) \in [k T_{\rm s},k T_{\rm s}+T] \times \SD$,
\begin{equation*}
        a(\tau,x) + b(\tau,x)^\top \upsilon_{k}^{\star}(\tau,x) \ge 0.
    \end{equation*}
\end{theorem}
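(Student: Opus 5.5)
Fix $k \in \BBN$ and $(\tau,x) \in [kT_\rms, kT_\rms+T] \times \SD$. By \cref{eq:CADP_plan}, $\upsilon_k^\star(\tau,x) = \upsilon^\star_{i,k}(\tau,x)$ for the index $i \in \SN$ with $\tau \in [t_{i,k},t_{i+1,k})$; the right endpoint $\tau = kT_\rms+T$ is handled by the last function $i=N-1$. So the claim reduces to a pointwise inequality: for every $i$ and every $(t,x)$, $a(t,x) + b(t,x)^\top \upsilon^\star_{i,k}(t,x) \ge 0$. The recursions \cref{eq:cadp_P_recursion,eq:cadp_T_recursion} and the nominal trajectory $\{\bar x_{i,k}\}$ only affect the values of $P_{i+1,k}$ and $T_{i+1,k}$. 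The argument should therefore use only the structure of \cref{eq:cadp_policy,eq:cadp_lambda}, plus the fact that $W_{i,k}(x)$ is positive definite.

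\textbf{Positive definiteness of $W_{i,k}(x)$.} Since $R_{i,k}$ in \cref{eq:discrete_input_weights} is positive definite ($R(t)$ is positive definite and $r_\kappa>0$), it suffices that $P_{i+1,k}$ is positive semidefinite. Then $R_{i,k}+G(x)^\top P_{i+1,k}G(x)$ is positive definite, so it is invertible and $W_{i,k}(x)$ is positive definite. I would show $P_{i+1,k}$ is positive semidefinite by backward induction on \cref{eq:cadp_P_recursion}:
\begin{itemize}
\item the base case is $P_{N,k}=Q_{N,k}$, which is positive semidefinite;
\item in the inductive step, each of $Q_{i,k}$, $K_{i,k}^\top R_{i,k}K_{i,k}$ and $\tilde A_{i,k}^\top P_{i+1,k}\tilde A_{i,k}$ is positive semidefinite.
\end{itemize}

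\textbf{Nonzero denominator in \cref{eq:cadp_lambda}.} Next I need $b(t,x)\neq 0$, so that $b^\top W_{i,k} b>0$ and $\lambda_{i,k}$ is well defined. There are two cases.
\begin{itemize}
\item If $x \notin \mbox{bd } \SH(t)$, then $h(t,x)\ne 0$. The last entry of $b(t,x)$ is $h(t,x)$, so $b(t,x)\neq0$.
\item If $x \in \mbox{bd } \SH(t)$, then \ref{assump:Lgh} gives $\frac{\partial h}{\partial x}B \ne 0$, which again gives $b(t,x)\ne 0$.
\end{itemize}
I expect this to be the only delicate point. It is exactly why \ref{assump:Lgh} appears in the hypothesis, and it mirrors the argument of \cref{prop:omega_feasibility}. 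I would also remark that boundary points are interpreted via continuity of $h$, so the two cases are exhaustive.

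\textbf{The inequality.} Write $c \triangleq -a - b^\top\zeta_{i,k}$ and $d \triangleq b^\top W_{i,k} b > 0$. Substituting \cref{eq:cadp_policy},
\begin{equation*}
a + b^\top\upsilon^\star_{i,k} = a + b^\top \zeta_{i,k} + \lambda_{i,k}\, d = -c + \max\{0, c/d\}\, d = \max\{-c, 0\} \ge 0 .
\end{equation*}
This holds for all $i$ and all $(t,x)$, and in particular at $t=\tau \in [t_{i,k},t_{i+1,k})$. That completes the proof.
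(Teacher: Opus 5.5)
Your argument is the paper's. You show $W_{i,k}(x)$ is positive definite, use \ref{assump:Lgh} to get $b(t,x)^\top W_{i,k}(x) b(t,x)>0$, and substitute \eqref{eq:cadp_policy} to obtain $\max\{0,\,a+b^\top\zeta_{i,k}\}\ge 0$. Your backward induction proving $P_{i+1,k}\succeq 0$ is a real tightening, since the paper infers ``$R_{i,k}$ positive definite, hence $W_{i,k}$ positive definite,'' which silently needs $G(x)^\top P_{i+1,k}G(x)\succeq 0$.

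Your remark on the case split needs a correction. Continuity of $h$ gives $\mathrm{bd}\,\mathcal{H}(t)\subseteq\{x : h(t,x)=0\}$, not the reverse. So ``$x\notin\mathrm{bd}\,\mathcal{H}(t)$ implies $h(t,x)\neq 0$'' does not follow for the topological boundary. Suppose $h(t,x)=0$ at an interior point of $\mathcal{H}(t)$. Then $x$ is a local minimizer of $h(t,\cdot)$, so $\frac{\partial h(t,x)}{\partial x}=0$ and $b(t,x)=0$. The denominator in \eqref{eq:cadp_lambda} vanishes there. Moreover, $a(t,x)+b(t,x)^\top\bar\upsilon=\frac{\partial h(t,x)}{\partial t}$ for every $\bar\upsilon$, which may be negative. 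The step is therefore valid only if $\mathrm{bd}\,\mathcal{H}(t)$ in \ref{assump:Lgh} is read as the zero-level set $\{x : h(t,x)=0\}$. The paper makes this identification implicitly, both here and in the proof of \cref{prop:omega_feasibility}. You should state that reading explicitly rather than appeal to continuity.
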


\begin{proof}
Let $k \in \BBN$ and $i \in \SN$.
Since $R_{i,k}$ is positive definite, it follows that $W_{i,k}$ is positive definite. 
Thus, \ref{assump:Lgh} implies that for all $(\tau,x) \in [k T_{\rm s},k T_{\rm s}+T] \times \BBR^n$, $b(\tau,x)^\top W_{i,k}(x)b(\tau,x)>0$.
Next, substituting \cref{eq:cadp_policy,eq:cadp_lambda,eq:zeta_ik} into \cref{eq:nu} yields $a(\tau,x)+b(\tau,x)^\top\upsilon_{i,k}^\star(\tau,x) =\max \{0,a(\tau,x)+b(\tau,x)^\top\zeta_{i,k}^\star(x)\}\geq 0$.
\end{proof}

\Cref{thm:planning_safety} shows that for each $k \in \BBN$, the C-ADP control plan \eqref{eq:CADP_plan} satisfies the CBF safety constraint over the associated planning horizon.



Next, we examine the optimality properties of the C-ADP control functions \cref{eq:cadp_policy}. 
For all $(i,k) \in \SN \times \BBN$, the C-ADP control functions \cref{eq:cadp_policy,eq:cadp_lambda,eq:zeta_ik,eq:W_ik,eq:cadp_P_recursion,eq:cadp_T_recursion} are obtained from a quadratic approximation of the cost-to-go along the nominal trajectory at step $i$ of the backward recursion.
For all $(i,k) \in \SN \times \BBN$, this approximate cost at step $i$ is given by 
\begin{align}
\tilde{J}_{i,k}(x, \bar \upsilon ) &\triangleq \ell_{i,k}(x, \bar{\upsilon}) \nn \\
    & \quad + \frac{1}{2} \left [F(x) + G(x) \bar \upsilon \right ]^\top P_{i+1,k} \left [F(x) + G(x) \bar \upsilon \right ] \nn \\
    & \quad + T_{i+1,k}^\top \left [F(x) + G(x) \bar \upsilon \right ], \label{eq:approximate_cost}
\end{align}
which is quadratic and strictly convex in $\bar{\upsilon}$. 
See \cite{gutierrez_receding-horizon_2026} for additional details. 
The next result shows that for all $(i,k) \in \SN \times \BBN$, the C-ADP control function \cref{eq:cadp_policy} is the unique global minimizer of \eqref{eq:approximate_cost} subject to the CBF constraint.
The proof is similar to that of \cite[Theorem 1]{safari_time-varying_2025} and is omitted for space.

\begin{theorem}\label{them:optimality}
Let $(i,k) \in \SN \times \BBN$ and $(\tau,x) \in [k T_{\rm s}, k T_{\rm s} + T] \times \SD$, and assume that \ref{assump:Lgh} is satisfied. 
Let $\upsilon \in \BBR^4$ be such that $a(\tau,x) + b(\tau,x)^\top \upsilon \geq 0$ and $\upsilon \neq \upsilon_{i,k}^\star (\tau, x)$. 
Then, $\tilde{J}_{i,k}(x, \upsilon) > \tilde{J}_{i,k}(x, \upsilon_{i,k}^\star(\tau,x))$.
\end{theorem}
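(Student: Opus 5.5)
The plan is to treat the claim as a strictly convex quadratic program in $\upsilon$ with one affine inequality constraint, and to read off the minimizer from the KKT conditions. Fix $(i,k)$ and $(\tau,x)$, and write $a=a(\tau,x)$, $b=b(\tau,x)$, $W=W_{i,k}(x)$, $\zeta=\zeta_{i,k}(x)$.

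First, expand \eqref{eq:approximate_cost} in $\upsilon$. This gives
\begin{equation*}
\tilde J_{i,k}(x,\upsilon) = c + \tfrac{1}{2}\upsilon^\top W^{-1}\upsilon + \upsilon^\top\bigl[G(x)^\top(P_{i+1,k}F(x)+T_{i+1,k})+\Omega_{i,k}\bigr],
\end{equation*}
where $c$ does not depend on $\upsilon$. By \eqref{eq:W_ik}, $W^{-1}=R_{i,k}+G(x)^\top P_{i+1,k}G(x)$. This matrix is positive definite, since $R_{i,k}$ is positive definite (because $R(t)$ and $r_\kappa>0$ are) and $P_{i+1,k}$ is positive semidefinite. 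The latter should follow by backward induction from \eqref{eq:cadp_P_recursion}, using $P_{N,k}=Q_{N,k}\succeq 0$. Completing the square using \eqref{eq:zeta_ik} yields the key identity
\begin{equation*}
\tilde J_{i,k}(x,\upsilon)=c'+\tfrac{1}{2}(\upsilon-\zeta)^\top W^{-1}(\upsilon-\zeta).
\end{equation*}
So the problem reduces to a $W^{-1}$-weighted projection of $\zeta$ onto the half-space $\{\upsilon: a+b^\top\upsilon\ge 0\}$.

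Next, write $\upsilon^\star=\upsilon_{i,k}^\star(\tau,x)=\zeta+\lambda Wb$ with $\lambda\ge0$. Assumption \ref{assump:Lgh} is used to guarantee $b\neq 0$, so that $b^\top Wb>0$ and $\lambda$ is well defined. The argument for $b\ne0$ is a case split: if $x\notin\mathrm{bd}\,\SH(\tau)$, then $h(\tau,x)\neq0$, which is the last entry of $b$; otherwise \ref{assump:Lgh} gives $\frac{\partial h}{\partial x}B\neq0$. From the proof of \cref{thm:planning_safety}, $\upsilon^\star$ is feasible, and it satisfies complementary slackness: $\lambda(a+b^\top\upsilon^\star)=0$. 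Indeed, if $\lambda>0$, then $a+b^\top\upsilon^\star=0$ by construction.

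Now let $\upsilon$ be feasible with $\upsilon\neq\upsilon^\star$, and set $d=\upsilon-\upsilon^\star$. The quadratic form gives
\begin{equation*}
\tilde J(\upsilon)-\tilde J(\upsilon^\star)=(\upsilon^\star-\zeta)^\top W^{-1}d+\tfrac12 d^\top W^{-1}d=\lambda b^\top d+\tfrac12 d^\top W^{-1}d.
\end{equation*}
The quadratic term is strictly positive since $d\neq0$. For the linear term:
\begin{itemize}
\item If $\lambda=0$, the term vanishes.
\item If $\lambda>0$, then $a+b^\top\upsilon^\star=0$, so $b^\top d=(a+b^\top\upsilon)-(a+b^\top\upsilon^\star)\ge0$ by feasibility of $\upsilon$.
\end{itemize}
Hence the difference is strictly positive, which proves the claim.

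The main obstacle is establishing that $P_{i+1,k}\succeq0$, so that $W^{-1}$ is positive definite; strict convexity depends on this. From \eqref{eq:cadp_P_recursion}, $P_{i,k}$ is a sum of $Q_{i,k}\succeq0$, $K^\top R K\succeq0$, and $\tilde A^\top P_{i+1,k}\tilde A$. So induction downward from $i=N$ works, and this matches the paper's remark that \eqref{eq:approximate_cost} is strictly convex. The rest is routine algebra.
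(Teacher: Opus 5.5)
Your proof is correct. It is essentially the standard KKT-type verification the paper defers to, since the paper omits the proof and cites an analogous closed-form CBF-QP result. You carry it out by hand: you complete the square to $\tfrac12(\upsilon-\zeta)^\top W^{-1}(\upsilon-\zeta)$, check stationarity and complementary slackness for $\upsilon^\star=\zeta+\lambda Wb$, and use strict convexity; you also make explicit, by backward induction, the fact $P_{i+1,k}\succeq 0$ that the paper uses only tacitly.

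One minor caveat, which the paper shares in its proofs of \cref{thm:planning_safety} and \cref{prop:omega_feasibility}: $x\notin\mathrm{bd}\,\SH(\tau)$ does not by itself force $h(\tau,x)\neq 0$. At an interior point where $h(\tau,\cdot)$ attains a local minimum equal to $0$, one has $h(\tau,x)=0$ and $\frac{\partial h(\tau,x)}{\partial x}=0$, hence $b(\tau,x)=0$. So $b(\tau,x)\neq 0$ is really a standing premise needed for $\upsilon^\star_{i,k}(\tau,x)$ to be defined at all, rather than a consequence of \ref{assump:Lgh} alone.
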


\subsection{Receding-Horizon C-ADP Control}

The real-time receding-horizon C-ADP control is constructed using the first function $\upsilon^{\star}_{0,k}$ each time the C-ADP control plan is updated. 
Specifically, for all $k \in \BBN$ and all $t \in [kT_{\rm s}, (k+1)T_{\rm s})$, the C-ADP control $u_{\star}: [0,\infty) \times \SD \to \BBR^3$ is given by
\begin{equation}\label{eq:optimal_safe_control}
u_{\star}(t,x) \triangleq \begin{bmatrix} I_{3 \times 3} & 0_{3 \times 1} \end{bmatrix} \upsilon^{\star}_{0,k}(t,x),
\end{equation}
and the C-ADP slack parameter is  $\kappa_{\star}: [0,\infty) \times \SD \to \BBR$ is
\begin{equation}
\kappa_{\star}(t,x) \triangleq \begin{bmatrix} 0_{1 \times 3} & 1 \end{bmatrix} \upsilon^{\star}_{0,k}(t,x).
\end{equation}
The receding-horizon C-ADP control is summarized in \cref{alg:C-ADP}.
Note that $u_{\star}$ and $\kappa_{\star}$ are piecewise continuous in $t$ and continuous in $x$ on $\SD$. 
In addition, $u_{\star}$ and $\kappa_{\star}$ are continuous in $t$ on each interval $[kT_{\rm s}, (k+1)T_{\rm s})$ but are generally discontinuous at $kT_{\rm s}$, which corresponds to the $\upsilon^{\star}_{0,k}$ update. 
Moreover, if $\frac{ \partial h(t,x)}{\partial x}$ is locally Lipschitz in $x$ on $\SD$, then \cref{eq:cadp_policy,eq:cadp_lambda,eq:W_ik,eq:zeta_ik} imply that $u_{\star}$ is locally Lipschitz in $x$ on $\SD$.

\begin{algorithm} 
  \caption{Receding-Horizon C-ADP Control} \label{alg:C-ADP}
  \KwInput{$\{\bar{x}_{i,0}\}_{i\in\mathcal{N}}, Q, \Gamma, R,
    \Omega, r_\kappa, T_{\rm p}, T, a, b, F, G$}
  \KwOutput{$\{v_{i,k}^\star(t,x)\}_{i\in\mathcal{N}}$}

  \For{$k \gets 0,1,2,\ldots$}{
    \If{$k>0$}{
      \Comment{Forward pass}
      $\bar{x}_{0,k} \gets x(t_{0,k})$\;
      \For{$i \gets 0,\ldots,N-1$}{
        Solve \cref{eq:system_dynamics,eq:system_dynamics.2} on $[iT_{\rm p},(i+1)T_{\rm p}]$ with $x_0=\bar{x}_{i,k}$ and $u= [ \begin{smallmatrix}I_{3\times3}&0_{3\times1}\end{smallmatrix} ] \upsilon^\star_{i,k-1}$\;
        $\bar{x}_{i+1,k} \gets x((i+1)T_{\rm p})$\;
      }
    }
    \Comment{Backward pass}
    $\{\upsilon^{\star}_{i,k}\}_{i\in\mathcal{N}} \gets \text{\cref{eq:cadp_policy,eq:cadp_lambda,eq:zeta_ik,eq:W_ik,eq:cadp_P_recursion,eq:cadp_T_recursion}}$\;
    \Comment{Extract control function}
    $u_*(t,x) \gets \text{\cref{eq:optimal_safe_control}}$\;
  }
\end{algorithm}

The following result is the main result on constraint satisfaction for the receding-horizon C-ADP control.
The proof is similar to \cite[Theorem 2]{safari_time-varying_2025} and is omitted for space.

\begin{theorem}\label{eq:theorem_main}
Consider \cref{eq:system_dynamics,eq:system_dynamics.2}, where   \ref{assump:Lgh} is satisfied. 
Let $u = u_{\star}$, where $u_{\star}$ is given by \cref{eq:optimal_safe_control}. 
Assume $\frac{ \partial h(t,x)}{\partial x}$ is locally Lipschitz in $x$ on $\SD$.
Then, for all $x_0 \in \ubar{\mathcal{C}}(0)$, there exists $t_{\rm m} \in (0,\infty]$ such that \cref{eq:system_dynamics,eq:system_dynamics.2} with $u = u_{\star}$ has a unique solution on $[0, t_{\rm m})$.
Furthermore, For all $t \in [0, t_{\rm m})$, $x(t) \in \ubar{\mathcal{C}}(t) \subset \SSS(t)$.
\end{theorem}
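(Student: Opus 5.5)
The argument has three parts: existence and uniqueness of the closed-loop solution, forward invariance of $\SH(t)$, and then \cref{prop:forward_invariance} to pass to $\ubar{\SC}(t)$.

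\textbf{Existence and uniqueness.} Each $\upsilon^\star_{0,k}$ is continuous in $t$ on $[kT_\rms,(k+1)T_\rms)$. Because $\partial h/\partial x$ is locally Lipschitz on $\SD$, the formulas \cref{eq:cadp_policy,eq:cadp_lambda,eq:W_ik,eq:zeta_ik} make $u_\star$ locally Lipschitz in $x$ on $\SD$. Here $\lambda_{0,k}$ is a maximum of Lipschitz functions, and its denominator is bounded away from zero on a neighborhood of any point where the quotient matters, by \ref{assump:Lgh} and $W_{0,k}\succ 0$. Also, $f$ is smooth on the open set $\SD$. I would apply the standard local existence-uniqueness theorem on each interval $[kT_\rms,(k+1)T_\rms)$ and concatenate the pieces, taking as initial condition at $kT_\rms$ the left limit from the previous interval. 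This gives a unique solution on a maximal interval $[0,t_\rmm)$, with $t_\rmm\in(0,\infty]$.

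\textbf{Invariance of $\SH(t)$.} Define $\sigma(t)\triangleq h(t,x(t))$. On each interval $[kT_\rms,(k+1)T_\rms)$, \cref{thm:planning_safety} gives
\begin{equation*}
\dot\sigma(t)+\alpha(\sigma(t))+\kappa_\star(t,x(t))\,\sigma(t)\ge 0 .
\end{equation*}
Since $\dot\sigma\ge -\alpha(\sigma)-\kappa_\star\sigma$ and $\sigma$ is continuous across the update times (only the control jumps), I would use a comparison argument. Suppose $\sigma$ first becomes negative at some time. At the boundary point, where $\sigma=0$, the inequality gives $\dot\sigma\ge 0$, because $\alpha(0)=0$ and the slack term vanishes. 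More carefully, on any interval where $\sigma<0$, locally bounded $\kappa_\star$ and locally Lipschitz $\alpha$ imply $\dot\sigma\ge -L|\sigma|=L\sigma$ for some constant $L$. Grönwall's inequality then gives $\sigma(t)\ge \sigma(t_0)e^{L(t-t_0)}=0$, a contradiction. This step is where I expect the main obstacle. One must check that $\kappa_\star$ is bounded along the trajectory near the crossing time; this holds because the trajectory stays in a compact subset of $\SD$ on a compact time interval and $u_\star,\kappa_\star$ are continuous there. One must also handle the countably many discontinuity times of the control, which is done by applying the argument piecewise while using continuity of $\sigma$. The initial condition needed is $x_0\in\SH(0)$.

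\textbf{Passing to $\ubar{\SC}(t)$.} Invariance of $\SH$ gives $x(t)\in\SH(t)\subset\ubar{\SP}(t)$ for all $t\in[0,t_\rmm)$, using \cite[Proposition 2]{safari_time-varying_2025}. \Cref{prop:forward_invariance} then yields $x(t)\in\ubar{\SC}(t)\subset\SC(t)\subset\SSS(t)$. One subtlety is the initial condition: the statement assumes $x_0\in\ubar{\SC}(0)$, whereas the invariance argument above uses $x_0\in\SH(0)$. I would either interpret the hypothesis as also requiring $x_0\in\SH(0)$, mirroring \cite[Theorem 2]{safari_time-varying_2025}, or note that this holds for $\rho$ large. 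I would flag this as the gap to resolve in the write-up.
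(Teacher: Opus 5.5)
This is essentially the argument the paper has in mind; it omits the proof and defers to \cite[Theorem 2]{safari_time-varying_2025}. The steps are the same as yours: local Lipschitz continuity of $u_\star$ gives existence and uniqueness, the inequality from \cref{thm:planning_safety} plus a comparison/Gr\"onwall argument gives invariance of $\SH(t)$, and then $\SH(t)\subset\ubar{\SP}(t)$ together with \cref{prop:forward_invariance} gives the conclusion.

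The initial-condition mismatch you flag is real, and it is a defect of the statement rather than of your argument. If $h(0,x_0)<0$, the term $\bar\kappa h$ lets $\nu\ge 0$ be met through $\bar\kappa$ alone, so the constraint does not force $h$ to become or stay nonnegative, and the invariance step has nothing to start from. The correct repair is your first option: assume $x_0\in\SSS(0)\cap\SH(0)$, which is automatically contained in $\ubar{\SC}(0)$.

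Drop your second option. Since $\rho$ is fixed and $\mathrm{softmin}_\rho$ lies strictly below the minimum, every $x_0\in\ubar{\SC}(0)$ at which one of the composed functions vanishes (e.g.\ $\gamma_0=\bar\gamma$) has $h(0,x_0)<0$ for every $\rho>0$. For interior points, the $\rho$ needed depends on $x_0$.
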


\Cref{eq:theorem_main} shows that the C-ADP control ensures all constraints are satisfied in real time $t$, and \Cref{thm:planning_safety} shows that for each update $k \in \BBN$, the C-ADP control plan \eqref{eq:CADP_plan} satisfies constraints in planing time $\tau$ over the planning horizon $T$.  
In addition, \Cref{them:optimality} describes the optimality property of the C-ADP control functions \eqref{eq:cadp_policy}.

\section{Numerical Simulations}

Consider an airspace with an admissible altitude of $10 \leq e_3^\top q \leq 300\,\mathrm{m}$, an operational geofence of $-200 \leq e_1^\top q \leq 200\,\mathrm{m}$ and $-150 \leq e_2^\top q \leq150\,\mathrm{m}$, and $r_{\rm obs}$ static spherical obstacles. The airspace is the intersection of the zero-superlevel sets of
\begin{align*}
b_1(q) &\triangleq e_3^\top q-10, &
b_2(q) &\triangleq 300-e_3^\top q,\\
b_3(q) &\triangleq e_1^\top q+200, &
b_4(q) &\triangleq 200-e_1^\top q,\\
b_5(q) &\triangleq e_2^\top q+150, &
b_6(q) &\triangleq 150-e_2^\top q,\\
b_{j+6}(q) &\triangleq \|q-q_{\mathrm{obs},j}\|^2-\delta_{\mathrm{obs},j}^2,
&& j=1,\ldots,r_{\rm obs},
\end{align*}
where $q_{\mathrm{obs},j}\in\mathbb{R}^3$ and $\delta_{\mathrm{obs},j} > 0$ is the position and radius of the $j$-th obstacle respectively. 

The wind velocity is $w \in \mathcal{N}(\mu,\Sigma)$, where $\mathcal{N}$ is Gaussian white noise with mean $\mu = \begin{bmatrix} 4.34 & 1.08 & 0 \end{bmatrix}^{\top} \, \mathrm{m/s}$ and covariance $\Sigma = 0.04 \, I_3 \, \mathrm{m/s}$. The measured wind velocity is $\hat{w} = \mu$ and the measurement error bound is $\bar{w} = 0.6 \, \mathrm{m/s}$.

Let $q_{\rm g} \in \BBR^3$ be the goal position. The objective is to navigate to the goal position, while remaining inside the airspace and satisfying state constraints. We say the aircraft \textit{reaches the target} at time $t_{\rm f}$ if the position satisfies $\|q(t_{\rm f}) -q_{\rm g}\| \leq 3.0 \, \rm{m}$. We say the aircraft \textit{does not reach the target} if  $t = T_{\rm f} = 50 \, \rm{s}$ and $\|q(T_{\rm f}) -q_{\rm g}\| > 3.0 \, \rm{m}$.

We compare the receding-horizon C-ADP control \cref{eq:optimal_safe_control} with two other control approaches. In the three approaches, the airspeed barrier functions \cref{eq:airspeed_cons_bound_1,eq:airspeed_cons_bound_2} use $\ubar{v} = 2 \, \rm m/\rm s$, $\bar{v} =20 \, \rm m/\rm s$ and $\varepsilon = 10^{-6}$. The barrier functions \cref{eq:flight_path_constraint,eq:roll_constraint} use $\ubar{\gamma} = - \frac{2 \pi}{9} \, \rm{rad}$, $\bar{\gamma} = \frac{2 \pi}{9} \, \rm{rad}$, $\ubar{\phi} = -\frac{5 \pi}{18} \, \rm{rad}$ and $\bar{\phi} = \frac{5 \pi}{18} \, \rm{rad}$. For the CBF constraint, we use $\beta_i(h) = 2 \,h$ in \cref{eq:h_11}, $\rho = 40$ in \cref{eq:h} and $\alpha(h) = h$ in \cref{eq:nu}. For sample-data implementation, all controllers are executed at $100 \, \rm{Hz}$ with a zero-order hold.

\textbf{Myopic Minimum Intervention (MMI)}: 
A desired control $u_{\rm d}: \BBR^n \to \BBR^3$ that achieves the objective without satisfying state constraints is designed. 
Define the desired pointing direction $y_{\rm d}(x) \triangleq -\frac{q - q_{\rm g}}{\|q - q_{\rm g}\|}$. 
Then, the desired heading and flight-path angle are
\begin{align*}
    \psi_{\rm d}(x) &\triangleq \operatorname{atan2}(e_2^\top y_{\rm d}(x), e_1^\top y_{\rm d}(x)), \\
    \gamma_{\rm d}(x) &\triangleq -\arcsin(e_3^\top y_{\rm d}(x)). \label{eq:desired_gamma}
\end{align*}
To obtain the largest speed that satisfies
\cref{eq:airspeed_cons_bound_2}, let $\delta > 0$ and define $c \triangleq \bar v^2+\bar w^2 -\sqrt{4\bar v^2\bar w^2+\varepsilon} - \delta$. Then, consider
\begin{equation*}
s_{\rm d}(t,x) \triangleq y_{\rm d}(x)^\top\hat w + \sqrt{ c - \left\| \left(I-y_{\rm d}(x)y_{\rm d}(x)^\top\right)\hat{w}(t) \right\|^2}.
\end{equation*}

Consider the wrapping function $\varphi: \BBR \to (-\pi, \pi]$ defined as $\varphi(z) \triangleq \operatorname{atan2}(\sin z, \cos z)$. Next, let $a_{\psi} > 0$ and consider the desired roll angle
\begin{multline*}
    \phi_{\rm d}(x) \triangleq \tan^{-1} \Big( \frac{s \cos\gamma}{g_{\rm E}}
    \big[ \dot{\psi}_{\rm d}(x) - a_{\psi} \varphi(\psi - \psi_{\rm d}(x) ) \big] \Big).
\end{multline*}
Next, let $\sigma > 0$ and consider the desired control $u_{\rm d} \colon [0,\infty) \times \BBR^{n} \to \BBR^{3}$ defined by
\begin{equation} \label{eq:u_d}
    u_{\rm d}(t,x) \triangleq \begin{bmatrix} \dot{s}_{\rm d}(t,x) \\ \dot{\gamma}_{\rm d}(x) \\ \dot{\phi}_{\rm d}(x) \end{bmatrix} - \sigma \left(\begin{bmatrix} s \\ \gamma \\ \phi \end{bmatrix} - \begin{bmatrix} s_{\rm d}(t,x) \\ \gamma_{\rm d}(x) \\ \phi_{\rm d}(x) \end{bmatrix} \right).
\end{equation}

We use the desired control parameters $\delta = 10^{-3}$, $a_{\psi} = 1 $, $\sigma = 3$. Then, the implemented control is computed from a minimum-intervention QP with the CBF-constraint, namely
\begin{subequations}\label{eq:minimum_intervention}
\begin{align}
    &\big( u(t,x), \kappa(t,x) \big) = \argmin_{\bar{u}, \bar{\kappa}} \| \bar{u} - u_{d}(t,x) \|^2 + r_{\kappa} \bar{\kappa}^2 ,\\
    & \qquad \text{subject to} \quad a(t, x) + b(t, x)^\top \left[ \begin{smallmatrix}
        \bar u \\
        \bar \kappa
    \end{smallmatrix} \right] \geq 0.
\end{align}
\end{subequations}
with $r_{\kappa} = 10^8$.

\textbf{iLQR with Minimum Intervention (iLQR+MI)}: We compute a desired control $u_{\rm d}$ using iLQR, which is equivalent to minimizing the cost \cref{eq:cost_function} without the CBF constraint. 
Define $x_{\rm g}\triangleq\begin{bmatrix}q_{\rm g}^{\top}&0&0&0&0\end{bmatrix}^{\top}$.
Then, the cost function \cref{eq:regularized_cost}, with $T = 10.0 \, \rm{s}$ is given by $Q=\operatorname{blkdiag}(I_{3 \times 3},0,0,0,0)$, $\Gamma=-x_{\rm g}$, $R=I_{3 \times 3}$ and $\Omega = 0$. The update period and planning time step are $T_{\rm s} = T_{\rm p} = 0.05 \, \rm{s}$.
Since iLQR does not guarantee constraint satisfaction, the implemented control is computed from a minimum-intervention QP with the CBF-constraint as in \cref{eq:minimum_intervention}.

\textbf{Constrained-Approximate Dynamics Programming (C-ADP) Control}:
The control \cref{eq:optimal_safe_control} with $T = 10.0 \, \rm{s}$ is implemented with $Q=\operatorname{blkdiag}(I_{3 \times 3},0,0,0,0)$, $\Gamma=-x_{\rm g}$, $R=I_{3 \times 3}$, $\Omega = 0$ and $r_{\kappa} = 10^8$. The update period and planning time step are $T_{\rm s} = T_{\rm p} = 0.05 \, \rm{s}$.

\subsection{Single Maneuver Analysis}

Consider the initial position $q_{\rm 0} = \begin{bmatrix} -90 & -20 & 110 \end{bmatrix}^{\top}\,{\rm m} $ and initial speed $s_{\rm 0}=10\,{\rm m/s}$. The initial heading, flight-path, and roll angles are zero. The goal position is $ q_{\rm g} = \begin{bmatrix} 150 & 60 & 150 \end{bmatrix}^{\top}\,{\rm m}$.

\begin{figure}
    \centering
    \includegraphics[width=0.99\columnwidth]{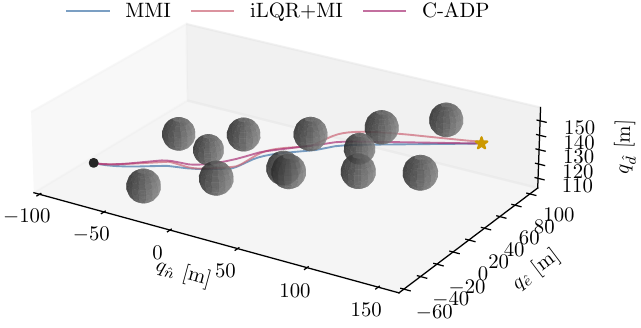}
    \caption{Position trajectories for the executed maneuvers. The black circle and gold star are the initial and goal positions, respectively.}
    \label{fig:traj_position}
\end{figure}

\begin{figure}
    \centering
    \includegraphics[width=0.9\columnwidth]{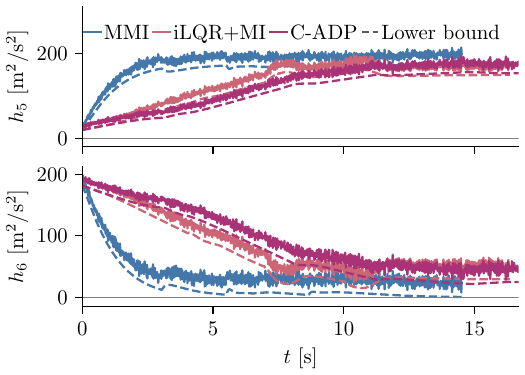}
    \caption{Airspeed barrier functions. Dashed curves show the computable lower bounds.}
    \label{fig:traj_cons}
\end{figure}

\begin{figure}
    \centering
    \includegraphics[width=0.9\columnwidth]{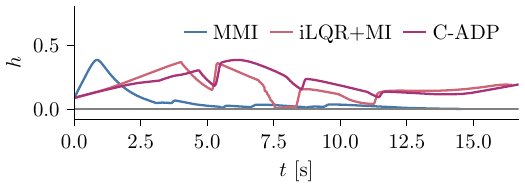}
    \caption{Composite CBF $h$ for the MMI, iLQR+MI and C-ADP controllers.}
    \label{fig:softmin}
\end{figure}

\Cref{fig:traj_position} compares the position trajectories. \Cref{fig:traj_cons} shows the airspeed barrier functions and their computable lower bounds. For all $t \in [0,t_{\rm f}]$, $\ubar{h}_5(t,x(t)) \leq h_5(t,x(t))$ and $\ubar{h}_6(t,x(t)) \leq h_6(t,x(t))$. Moreover, \Cref{fig:softmin} shows that the composite CBF $h$ remains nonnegative throughout the maneuver, which implies that $x(t) \in \mathcal{S}(t)$ for all $t \in [0,t_{\rm f}]$. Note that the C-ADP trajectory is always in the interior of $\SH(t)$ while both MMI and iLQR+MI have segments of time in which $x(t) \in \mbox{bd } \SH(t)$, which makes them prone to violating the state constraints. 

\begin{figure}
    \centering
    \includegraphics[width=0.99\columnwidth]{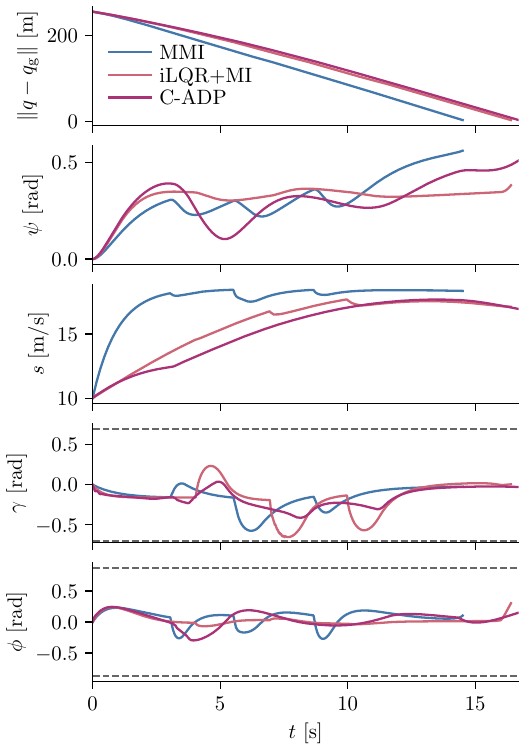}
    \caption{State histories for the MMI, iLQR+MI and C-ADP controllers. Dashed lines denote the flight-path and roll-angle limits.}
    \label{fig:traj_states}
\end{figure}

\begin{figure}
    \centering
    \includegraphics[width=0.99\columnwidth]{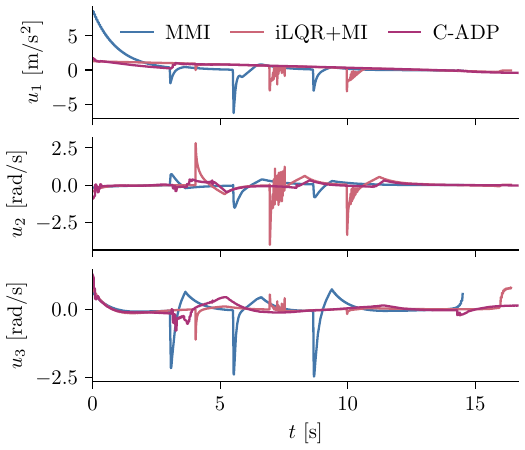}
    \caption{Control inputs for the MMI, iLQR+MI and C-ADP controllers.}
    \label{fig:traj_inputs}
\end{figure}

MMI completes the maneuver in the least amount of time, reaching the target at $t_{\rm f}=14.51,{\rm s}$. While iLQR+MI and C-ADP reach the target at $t_{\rm f}=16.40,{\rm s}$ and $t_{\rm f}=16.71,{\rm s}$, respectively. \Cref{fig:traj_states} shows the position error and state histories. MMI maintains a higher speed over much of the maneuver, which contributes to its shorter arrival time. However, this behavior is accompanied by larger control inputs, as shown in \Cref{fig:traj_inputs}. Both iLQR+MI and C-ADP optimize performance over a finite prediction horizon. The distinction is in how safety is incorporated: iLQR+MI enforces safety pointwise using the minimum-intervention CBF filter, whereas C-ADP incorporates the CBF constraint throughout the prediction horizon.

\subsection{Performance Study}

We compare the three controllers over the 16 maneuvers formed by four initial conditions and four goal positions. We use four performance metrics to compare the controllers: the time at which the aircraft reaches the target, the trajectory performance (TP), the control performance (CP), and the control intervention (CI), defined as $\mathrm{TP} \triangleq \int_0^{t_{\rm f}} \|q(\tau)-q_{\rm g}\|^2 \, \mathrm{d}\tau$, $\mathrm{CP} \triangleq \int_0^{t_{\rm f}} \|u(\tau)\|^2 \, \mathrm{d}\tau$ and $\mathrm{CI} \triangleq \max_{t \in [0,t_{\rm f}]} \|u(t)-u_{\rm d}(t)\|$.
%
%
For MMI and iLQR+MI, $u_{\rm d}$ is the desired control. For C-ADP, $u_{\rm d}$ is the unconstrained component of the C-ADP control obtained by setting $\lambda_{0,k}=0$ in \cref{eq:cadp_policy}.

\begin{table}[ht]
\caption{Controller performance over 16 maneuvers.}
\label{tab:controller_comparison}
\centering
\footnotesize
\begin{tabular*}{\columnwidth}{@{\extracolsep{\fill}}lccc@{}}
\hline
Metric & MMI & iLQR+MI & C-ADP \\
\hline
Targets reached
    & $\mathbf{16/16}$
    & $12/16$
    & $\mathbf{16/16}$ \\
$t_{\mathrm{f}}$ (s)
    & $14.9 \pm 0.95$
    & $22.7 \pm 5.8$
    & $\mathbf{14.7 \pm 0.98}$ \\
$\mathrm{TP}/10^{2}$
    & $3.61 \pm 0.61$
    & $4.25 \pm 0.61$
    & $\mathbf{3.51 \pm 0.59}$ \\
$\mathrm{CP}$
    & $37.93 \pm 7.79$
    & $\mathbf{17.77 \pm 9.86}$
    & $45.86 \pm 4.93$ \\
$\mathrm{CI}$
    & $4.65 \pm 4.2$
    & $5.68 \pm 4.6$
    & $\mathbf{3.72 \pm 1.1}$ \\
\hline
\end{tabular*}
\vspace{2pt}
\parbox{\columnwidth}{\scriptsize
Metric entries report the mean $\pm$ standard deviation computed over the completed maneuvers.}
\end{table}

\Cref{fig:trajectories} shows the resulting position trajectories, and \cref{tab:controller_comparison} summarizes the performance of the three controllers. Across the 16 start-goal pairs, MMI and C-ADP reached all targets, whereas iLQR+MI reached 12. Over the 12 maneuvers completed by all controllers, C-ADP had the lowest mean TP, arrival time, and CI, but the largest mean CP. In contrast, iLQR+MI had the lowest mean CP, but the largest mean TP, arrival time, and CI. MMI achieved performance between the two predictive controllers, with an arrival time comparable to C-ADP. Thus, C-ADP achieved the lowest mean trajectory cost and safety intervention while maintaining an arrival time comparable to MMI, at the expense of a larger control-performance metric.

\begin{figure}[ht]
    \centering
    \includegraphics[width=0.92\columnwidth]{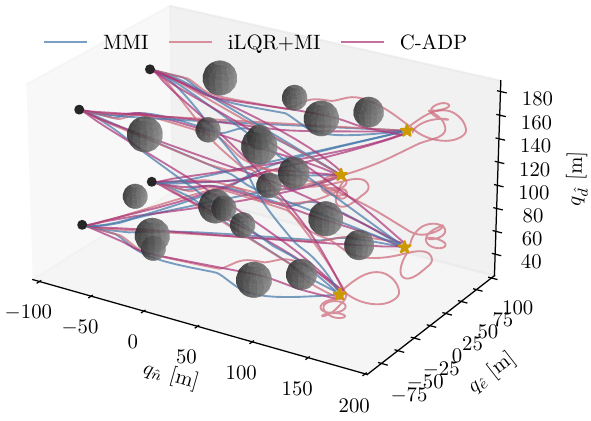}
    \caption{Position trajectories for the 16 maneuvers executed with the MMI, iLQR+MI and C-ADP controllers. Black circles are initial positions and gold stars are goal positions. Failed iLQR+MI maneuvers loiter around goal position.}
    \label{fig:trajectories}
\end{figure}

\appsection[]{}\label{app:prop_proofs}

\begin{proof}[Proof of \cref{prop:robust_constraint}]
Let $z \triangleq sy(\psi,\gamma)$ and define $\tilde{w}(t) \triangleq w(t)-\hat{w}(t)$.
First, we consider $\ubar{h}_5$. It follows from \cref{eq:airspeed_cons_1,eq:airspeed_cons_bound_1} that
\begin{align*}
    h_5(t,x)-\ubar{h}_5(t,x) &= \|z-w(t)\|^2-\|z-\hat{w}(t)\|^2 \\
    & \qquad + 2\bar{w}\sqrt{\varepsilon+\|z-\hat{w}(t)\|^2} \\
    &= -2 \big( z-\hat{w}(t) \big)^\top  \tilde{w}(t) + \|\tilde{w}(t)\|^2 \\
    & \qquad + 2\bar{w}\sqrt{\varepsilon+\|z-\hat{w}(t)\|^2} \\
    & \geq -2\|z-\hat{w}(t)\|\|\tilde{w}(t)\| + \|\tilde{w}(t)\|^2 \\
    & \qquad + 2\bar{w}\|z-\hat{w}(t)\| \\
    &=2\|z-\hat{w}(t)\| \big(\bar{w}-\|\tilde{w}(t)\|\big) + \|\tilde{w}(t)\|^2.
\end{align*}
Since $\|\tilde{w}(t)\| \le \bar{w}$ for all $t \ge 0$, it follows that $h_5(t,x) -\ubar{h}_5(t,x) \geq 0$, which implies that $\ubar{h}_5(t,x)\leq h_5(t,x)$.
A similar procedure is used to show that $\ubar{h}_6(t,x) \leq h_6(t,x)$.
\end{proof}



\begin{proof}[Proof of \cref{prop:forward_invariance}]
Since $x_0 \in \ubar{\SC}(0)$, it follows from \cref{prop:robust_constraint} that $x_0 \in \SC(0)$. Moreover, since for all $t \in [0,\infty)$, $x(t) \in \ubar{\SP}(t)$, it follows from \cref{eq:P_bar_set} that for all $t\in[0,\infty)$, $x(t)\in\SP(t)$.
Next, it follows from \cite[Theorem 3]{xiao_high-order_2022} that for all $t \in [0, \infty)$, $x(t) \in \SC(t)$. Moreover, it follows from \cref{eq:C_set,eq:C_bar_set} that $x(t)\in\ubar{\SC}(t)$.
\end{proof}

\begin{proof}[Proof of \cref{prop:omega_feasibility}]
    We consider 2 cases: $x \in \mbox{bd } \SH(t)$, and $x \in \BBR^{n} \setminus \mbox{bd } \SH(t)$.
    First, let $x \in \mbox{bd } \SH(t)$. Let $\bar{\kappa}_1 \in \BBR$ and
    \begin{equation*}
        \bar{u}_1 = \frac{-\frac{\partial h(t,x)}{\partial t} - \frac{ \partial h(t,x)}{\partial x} f(x)}{\|\frac{\partial h(x,t)}{\partial x}B\|^2} \Big[\frac{\partial h(x,t)}{\partial x}B \Big]^\top.
    \end{equation*}
    Then, \cref{eq:nu} yields $\nu(t,x,\bar u_1,\bar \kappa_1) = 0$.
    Next, let $x \in \BBR^{n} \setminus \mbox{bd } \SH(t)$. Let $\bar{u}_2 \in \BBR^3$ and
    \begin{equation*}
        \bar{\kappa}_2 = \frac{-\frac{\partial h(t,x)}{\partial t} - \frac{ \partial h(t,x)}{\partial x} \left [ f(x) + B \bar{u}_2 \right ] - \alpha \big( h(t,x) \big )}{h(t,x)}.
    \end{equation*}
    Then, \cref{eq:nu} yields $\nu(t,x,\bar u_2,\bar \kappa_2) = 0$. 
\end{proof}

\bibliographystyle{ieeetr}
\bibliography{References}

\end{document}